\DeclareMathAlphabet{\mathsf}{OT1}{\sfdefault}{m}{n}
\SetMathAlphabet{\mathsf}{bold}{OT1}{\sfdefault}{b}{n}
\numberwithin{equation}{section}
\definecolor{grey_pers}{RGB}{69 90 100}
\definecolor{WIMgreen}{RGB}{60 134 132}
\definecolor{red_pers}{RGB}{204 37 41}
\definecolor{UMblue}{RGB}{4 47 86}
\definecolor{myteal}{RGB}{0 123 137}
\definecolor{dartmouthgreen}{rgb}{0.05, 0.5, 0.06}\definecolor{cobalt}{rgb}{0.0, 0.28, 0.67}\definecolor{coolblack}{rgb}{0.0, 0.18, 0.39}
\definecolor{glaucous}{rgb}{0.38, 0.51, 0.71}\definecolor{hooker\'sgreen}{rgb}{0.0, 0.44, 0.0}\definecolor{lemonchiffon}{rgb}{1.0, 0.98, 0.8}\definecolor{oucrimsonred}{rgb}{0.6, 0.0, 0.0}\definecolor{radicalred}{rgb}{1.0, 0.21, 0.37}\definecolor{raspberry}{rgb}{0.89, 0.04, 0.36}\definecolor{royalazure}{rgb}{0.0, 0.22, 0.66}
\definecolor{dex}{RGB}{138 18 34}
\theoremstyle{plain}
\newtheorem{theorem}{Theorem}
\newtheorem{lemma}[theorem]{Lemma}
\theoremstyle{definition}
\theoremstyle{assumption}
\theoremstyle{remark}
\def\D{\mathbb{D}}
\def\E{\mathbb{E}}
\def\F{\mathcal{F}}
\definecolor{darkred}{rgb}{0,0.6,0}
\def\X{\mathbf{X}}
\renewcommand{\hat}{\widehat}
\let\originalleft\left
\let\originalright\right
\renewcommand{\left}{\mathopen{}\mathclose\bgroup\originalleft}
\renewcommand{\right}{\aftergroup\egroup\originalright}
\renewcommand{\X}{\mathbf{X}}
\newcommand{\Y}{Y}
\newcommand{\hY}{\hat{Y}}
\renewcommand{\th}{\theta}
\newcommand{\thf}{\boldsymbol{\theta}}
\newcommand{\Zf}{\mathbf{Z}}
\newcommand{\Z}{Z}
\renewcommand{\u}{U}
\newcommand{\uf}{\mathbf{U}}
\newcommand{\n}{\prescript{n}{}}
\renewcommand{\D}{D}
\newcommand{\Df}{\mathbf{D}}
\newcommand{\Vf}{\mathbf{V}}
\newcommand{\tn}{\left \lfloor{tn}\right \rfloor/n}
\newcommand{\Cf}{\mathbf{C}}
\newcommand{\Mf}{\mathbf{M}}
\newcommand{\delf}{\boldsymbol{\delta}}
\title{\fontsize{16}{19} \selectfont Is Learning in Biological Neural Networks based on Stochastic Gradient Descent?}
 \author[1]{Sören Christensen}
 \author[1]{Jan Kallsen}
 \affil[1]{Department of Mathematics, Kiel University}
\date{\today}
\begin{document}
\maketitle
{
\begin{abstract}
In recent years, there has been an intense debate about how learning in biological neural networks (BNNs) differs from learning in artificial neural networks. It is often argued that the updating of connections in the brain relies only on local information, and therefore a stochastic gradient-descent type optimization method cannot be used. In this paper, we study a stochastic model for supervised learning in BNNs. We show that a (continuous) gradient step occurs approximately when each learning opportunity is processed by many local updates. This result suggests that stochastic gradient descent may indeed play a role in optimizing BNNs.
\end{abstract}

\textbf{Keywords:} Biological neural networks, Schmidt-Hieber model, stochastic gradient descent, supervised learning

\vspace{.2cm}

\section{Introduction}
In order to understand how biological neural networks (BNNs) work, it seems natural to compare them with artificial neural networks (ANNs). Although the definition of the latter is inspired by the former, they also differ in several aspects.  One of them is the way the network parameters are updated. 

In simple terms, an ANN learns from data by adjusting the weights of the connections between nodes in order to minimize a loss function that measures the difference between the desired output and the actual output of the network. More specifically, the optimization step is performed using the Stochastic Gradient Descent (SGD) algorithm, which iteratively updates the weights of the network by moving them in the direction of the steepest descent of the empirical loss function of a single training sample.  The gradient itself is computed with the so-called backpropagation algorithm. In particular, the update of any parameter is based on the states of all other parameters. Such a mechanism does not seem to be biologically plausible for BNNs, as many authors have pointed out. Parameter update in BNNs occurs only locally, and distant neurons are only indirectly connected through the endogenous reward system. This observation is closely related to the weight transportation problem \cite{lillicrap2020backpropagation,crick1989recent,grossberg1987competitive}. We refer to \cite{whittington2019theories,tavanaei2019deep} for a detailed discussion about the role of SGD in BNN, which the author of \cite[Section 5]{schmidt2023interpreting} summarizes as follows: 
``[T]here are various theories that are centered around the idea that the learning in BNNs should be linked to gradient descent. All of these approaches, however, contain still biological implausibilities and lack a theoretical analysis.''

The starting point for the present paper is the recent article \cite{schmidt2023interpreting} just cited. In this seminal study, the author proposes a very persuasive stochastic model for brain-supervised learning
which has a thorough biological foundation in terms of spike-timing-dependent plasticity.
We review and discuss this setup in Section \ref{sec:model_SH}. In this model the local updating rule of the connection parameters in BNNs turns out to be a zero-order optimization procedure. More precisely, it is shown in \cite{schmidt2023interpreting} that the expected value of the iterates coincides with a modified gradient descent. However, this holds only on average. The noise for such zero-order methods is so high that one can hardly imagine effective learning based on it, see \cite{duchi2015optimal,nesterov2017random,conn2009introduction}. The author himself writes in  \cite[Section 4]{schmidt2023interpreting}: ``It remains to reconcile the observed efficiency of learning in biological neural networks with the slow convergence of zero-order methods.'' 

In this paper we make an attempt to achieve this reconciliation. To this end, we consider in Section \ref{sec:unser_model} a slight modification of the model of \cite{schmidt2023interpreting}. More specifically, we relax the assumption that for each learning  opportunity and each connection, exactly one spike is released. Instead, we assume that a large number of spikes is released for each training sample in the BNN model and thus many parameter updates are made {for any observed item}.
{Our revised model can be thought of as the original Schmidt-Hieber system which receives and processes each
input-output pair $n$ times in a row rather than once.}
It turns out that with this modification, the updates correspond approximately to a continuous descent step along the gradient flow, see Theorem~\ref{thm:convergence}. This can be interpreted in the sense that it is not biologically implausible that BNNs use a kind of SGD algorithm after all, but without explicitly computing the gradient.

\section{The Schmidt-Hieber model for BNNs revisited}\label{sec:model_SH}

We begin this section by reviewing the model introduced in \cite{schmidt2023interpreting}. It considers a classical instance of supervised learning: input-output pairs $(\X_1,\Y_1), (\X_2,\Y_2),\dots$ are given as observations, all being identically distributed. The goal is to predict the output $\Y$ for each new input $\X$ based on previous training data. This setting includes, for example, classification (when the set of possible outcomes of\ $\Y$ is finite) or regression problems. 

A (feedforward) biological neural network (BNN) is modeled in \cite{schmidt2023interpreting} as a directed acyclic graph with input neurons receiving information from the observations $\X_k$ and generating a predicted response $\hY_k$ as output. The nodes represent the neurons in the network and an edge $\nu=(i,j)$ between two nodes $i$ and $j$ indicates that neuron $i$ is presynaptic for neuron $j$. Each element  $(i,j)$ in the edge set $\mathcal T$ has a weight $w_{ij}$ which indicates the strength of the connection between $i$ and $j$. While the structure of the graph does not change, the weights $w_{ij}$ are adjusted in each learning step. 

Spike-timing-dependent plasticity (STDP) is chosen as the biological mechanism to update the parameters. It is considered as a form of Hebbian learning \cite{hebb2005organization}, which states that neurons that fire together wire together. More precisely, the synaptic weight $w_{ij}$ changes depending on the timing of the spikes from neuron $i$ to neuron $j$. The weight decreases when neuron $i$ spikes before neuron $j$, and increases when neuron $j$ spikes before neuron $i$. The closer the spikes are in time, the larger the change in weight. It is important to note that the spike times are modeled as random variables. After some standardization (see \cite[Equation (4.3)]{schmidt2023interpreting}) the update of the parameter for edge $(i,j)$ becomes 
\[w_{ij} \leftarrow w_{ij} + w_{ij}C(e^{-\u_{ij}} - e^{\u_{ij}}),\]
where $\u_{ij}$ are uniformly distributed random variables on some interval $[-A,A]$, i.e.\ $\u_{ij}\sim \mathcal U(-A,A)$, modeling the random spike times. 
The constant $C$ represents the effect of a reward system, for example by neurotransmitters such as dopamine. It plays a key role for
any meaningful learning process.
In the present setup, the reward is tied to the success of predicting $\Y_k$, more specifically, whether the task is solved better or worse than in earlier trials. Using the restandardization $\theta_{ij}:=\log w_{ij}$ and a Taylor approximation, the following structure is derived in \cite[Equation (4.5)]{schmidt2023interpreting} for the update of $\th_{ij}$ at step $\ell$:
\begin{align}\label{eq:general_update}
\th_{ij}^{(\ell)} = \th_{ij}^{(\ell-1)} + \alpha^{(\ell-1)} \left(L^{(\ell-1)}(\thf^{(\ell-1)}+\uf^{(\ell)})-L^{(\ell-2)}(\thf^{(\ell-2)}+\uf^{(\ell-1)})\right)\bigl(e^{-\u_{ij}^{(\ell)}} - e^{\u_{ij}^{(\ell)}}\bigr),
\end{align}
where $\alpha^{(\ell)}>0$ is a learning rate, $\thf^{(\ell)}=\big(\theta^{(\ell)}_{ij}\big)_{(i,j)\in\mathcal T}$ denotes the vector of parameters for all edges, 
$\uf^{(\ell)}$ the vector of the independent uniformly distributed random variables $U^{(\ell)}_{ij}$ for all edges $(i,j)$, 
and  $L^{(\ell)}$ the loss function associated with the respective step $\ell$. Thus, the update of the weights of the individual edges
is in fact affected by the state of the entire network, but only through the value of the common loss function, which provides an assessment of the learning success. In particular, no gradient appears in the mechanism. 

In \cite{schmidt2023interpreting} the author considers the case where for each input-output pair $(\X_k,\Y_k)$ the parameters are updated only once. To this end, the loss of the current input-output pair is compared with the previous one. More specifically, 
we have $\ell=k$ as well as $L^{(k)}(\thf)=L(\thf,\X_k,\Y_k)$, $k=1,2,\dots$, leading to the update rule
\begin{align}\label{eq:update_SH}
	\th_{ij}^{(k)} = \th_{ij}^{(k-1)} + \alpha^{(k-1)} \left(L(\thf^{(k-1)}+\uf^{(k)},\X_{k-1},\Y_{k-1})-L(\thf^{(k-2)}+\uf^{(k-1)},\X_{k-2},\Y_{k-2})\right)\bigl(e^{-\u_{ij}^{(k)}} - e^{\u_{ij}^{(k)}}\bigr).
\end{align}
As a main result, the author shows in \cite[Theorem 1]{schmidt2023interpreting} that this procedure 
corresponds on average to a gradient descent method, with a gradient evaluated not exactly at $\thf^{(k-1)}$ but slightly perturbed randomly.
However, as noted in \cite{schmidt2023interpreting}, sufficiently fast convergence cannot be expected for such a zero-order method. 

\section{Multiple updates per learning opportunity}\label{sec:unser_model}
The key ingredient leading to \eqref{eq:update_SH} is that, for each learning opportunity and each connection, exactly one spike is triggered and thus only one update of the parameters is made. The author himself calls this assumption ``strong'', see \cite[Section 4]{schmidt2023interpreting}. 

Given the average spike frequency in real biological systems and the strong brain activity even at immobile rest \cite{liu2021single}, it seems more reasonable to assume instead a large number of spikes per learning opportunity. 
This corresponds to a series $\thf^{(k,0)},\dots,\thf^{(k,n)}$ of updates to the parameters after observing any input-output pair $(\X_{k-1},\Y_{k-1})$. The assessment of the update steps is based on the loss function associated with the most recent observation $(\X_{k-1},\Y_{k-1})$. More specifically, equation \eqref{eq:general_update}  turns into
\begin{align}\label{eq:unser_update0}
\th_{ij}^{(k,\ell)} ={}& \th_{ij}^{(k,\ell-1)} + \alpha^{(k-1,\ell-1)} \left(L(\thf^{(k,\ell-1)}+\uf^{(k,\ell)},\X_{k-1},\Y_{k-1})-L(\thf^{(k,\ell-2)}+\uf^{(k,\ell-1)},\X_{k-1},\Y_{k-1})\right)\nonumber\\
&\times \bigl(e^{-\u_{ij}^{(k,\ell)}} - e^{\u_{ij}^{(k,\ell)}}\bigr)
\end{align}
for $\ell\geq 1$ with initial values given by $\thf^{(k,0)}:=\thf^{(k,-1)}:=\thf^{(k-1,n)}$. 
Since $n$ is considered to be large, the individual update steps should be small in order to avoid overfitting. 

We start by analyzing this update rule for a fixed observation $(\X_{k-1},\Y_{k-1})$, i.e.\ for a fixed $k$.
For ease of notation we suppress the dependence on $k$ in the following considerations. 
Since our goal is to study the limiting behavior for a large number of update steps $n$, 
we instead make the dependence on $n$ explicit. So we rewrite \eqref{eq:unser_update0} as
\begin{align}\label{eq:unser_update}
\n\th_{ij}^{(\ell)} = \n\th_{ij}^{(\ell-1)} + \n\alpha^{(\ell-1)} \left(L(\n\thf^{(\ell-1)}+\n\uf^{(\ell)})-L(\n\thf^{(\ell-2)}+\n\uf^{(\ell-1)})\right)\bigl(e^{-\n\u_{ij}^{(\ell)}} - e^{\n\u_{ij}^{(\ell)}}\bigr).
\end{align}
The parameter update depends on increments of a loss function which resembles a gradient on first glance. Note, however, that this increment term is the same for all edges and, moreover, randomness occurs in both the loss function and in the external factor.
We now consider the following dependencies on $n$:
\begin{align}\label{eq:rate}
	\n\alpha^{(\ell-1)}:=\alpha, \quad A_n:=n^{-1/3}A, \quad\n\u_{ij}^{(\ell)}\sim \mathcal U\left(-A_n,A_n\right),
\end{align}
with constants $\alpha, A>0$.
Moreover, we rescale and extend the discrete-time process $(\n\th_{ij}^{(\ell-1)})_{\ell=-1,\dots,n}$ in time, defining a 
continuous-time process  $\Zf^n=(\Zf^n_t)_{t\in[-1/n,1]}$  by
\begin{align}\label{eq:Z_def}
	\Zf^n_{t_\ell^n}=\Zf^n_{t_{\ell-1}^n} + \alpha_n \left(L(\Zf^n_{t_\ell^n}+\n\uf^{(\ell)})-L(\Zf^n_{t_{\ell-2}^n}+\n\uf^{(\ell-1)})\right)\left(e^{-\n\u_{ij}^{(\ell)}} - e^{\n\u_{ij}^{(\ell)}}\right).
\end{align}
for $t_\ell^n:={\ell}/{n}$
and by $\Zf^n_t:=\Zf^n_{\left \lfloor{tn}\right \rfloor/n}$ 
for arbitrary $t\in[0,1]$.

As a candidate limit for large $n$ we consider a standard rescaled gradient process $\Zf=(\Zf_t)_{t\in[0,1]}$, 
which is defined as the solution to the deterministic ordinary differential equation (ODE)
\begin{align}\label{eq:Z_stetig}
\frac{d\Zf_t}{dt}=-{{2\over3}A^2}\alpha\nabla L(\Zf_t),\quad \Zf_0=\thf^{(k,0)},
\end{align}
see e.g.\ \cite{orvieto2019shadowing}.
In order for our main theorem to hold, we assume that  
\begin{align}\label{ass:1}
\mbox{$\nabla L$ is bounded and Lipschitz continuous with Lipschitz constant $\lambda$}.
\end{align}

$\Zf$ naturally emerges as a limit if you run the ordinary gradient descent algorithm for minimizing the function $L$ with many small steps.
The main result of this paper states that the rescaled STDP process $\Zf^n$ converges to $\Zf$ as well. More precisely, we have
\begin{theorem}\label{thm:convergence}
Assume \eqref{eq:rate} and \eqref{ass:1}. Then, for each fixed training sample $k$, the rescaled process  $\Zf^n$ of the BNN weights  converges to the rescaled gradient process $\Zf$ uniformly in $L^2$, i.e.\ 
\[\lim_{n\to\infty}\E\left(\sup_{t\in[0,1]}\|Z^n_t-Z_t\|^2\right)= 0.\]
{More specifically, 
\[\sqrt{\E\left(\sup_{t\in[0,1]}\|Z^n_t-Z_t\|^2\right)}\leq c \sqrt{d\over n}\]
holds for some constant $c<\infty$ which depends only on $\alpha, A, \lambda$.
Here $d$ denotes the number of edges $\nu=(i,j)$ in the network.}
\end{theorem}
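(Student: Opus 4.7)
The plan is to view $Z^n$ as a stochastic perturbation of the gradient flow $Z$ and to combine a semimartingale decomposition of the discrete increments with a discrete Gronwall estimate. First I would set up the decomposition. Let $\F_\ell:=\sigma(\uf^{(1)},\ldots,\uf^{(\ell)})$. Since $L_2:=L(Z^n_{t_{\ell-2}^n}+\uf^{(\ell-1)})$ is $\F_{\ell-1}$-measurable while $e^{-U_{ij}^{(\ell)}}-e^{U_{ij}^{(\ell)}}$ has vanishing conditional mean, the $L_2$-contribution to the increment produces a pure martingale difference under conditional expectation. For the $L_1$-term, I would Taylor-expand $L(Z^n_{t_{\ell-1}^n}+\uf^{(\ell)})$ around $Z^n_{t_{\ell-1}^n}$ and use the elementary moments
\[
\E[e^{-U_{ij}}-e^{U_{ij}}]=0,\qquad \E[U_{ab}(e^{-U_{ij}}-e^{U_{ij}})]=-\tfrac{2}{3}A_n^2\,\1_{\{(a,b)=(i,j)\}}+O(A_n^4),
\]
together with componentwise independence of $\uf^{(\ell)}$, to extract the conditional drift
\[
\E[\Delta Z^n_{t_\ell^n}\mid \F_{\ell-1}]=-\tfrac{2}{3}A_n^2\alpha\,\nabla L(Z^n_{t_{\ell-1}^n})+R_\ell,
\]
with a deterministic remainder satisfying $\|R_\ell\|\leq C\lambda\alpha A_n^3$, coming from the Taylor remainder and the Lipschitz constant $\lambda$ of $\nabla L$. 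Then $M_\ell:=\Delta Z^n_{t_\ell^n}-\E[\Delta Z^n_{t_\ell^n}\mid \F_{\ell-1}]$ is a martingale difference sequence with respect to $(\F_\ell)$.

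For the second-moment bound on the noise, observe that $|L_1-L_2|$ is controlled by $\|\nabla L\|_\infty(\|\uf^{(\ell)}\|+\|\uf^{(\ell-1)}\|)$ plus a lower-order correction from the change of argument, and $|e^{-U_{ij}^{(\ell)}}-e^{U_{ij}^{(\ell)}}|\leq C A_n$. A direct computation then gives $\E[\|M_\ell\|^2\mid \F_{\ell-1}]\leq C\alpha^2 A_n^4 d$. Martingale orthogonality combined with Doob's maximal inequality produces
\[
\E\!\Bigl[\max_{\ell\leq n}\Bigl\|\sum_{k=1}^{\ell}M_k\Bigr\|^2\Bigr]\leq 4n\cdot C\alpha^2 A_n^4 d,
\]
which under the stated scaling is of the required order $d/n$; an analogous $L^2$-bound for the accumulated remainder $\sum R_k$ yields the same order.

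Telescoping $Z^n_{t_\ell^n}-Z_{t_\ell^n}$ and inserting the above decomposition together with the standard $O(n^{-2})$-per-step forward Euler error for \eqref{eq:Z_stetig} gives
\[
Z^n_{t_\ell^n}-Z_{t_\ell^n}=-\tfrac{2}{3}A_n^2\alpha\sum_{k=1}^{\ell}\bigl(\nabla L(Z^n_{t_{k-1}^n})-\nabla L(Z_{t_{k-1}^n})\bigr)+\sum_{k=1}^{\ell}(R_k+M_k)+E^{\mathrm{Euler}}_\ell.
\]
Apply the Lipschitz bound $\|\nabla L(x)-\nabla L(y)\|\leq \lambda\|x-y\|$ to the gradient-mismatch term, take the supremum over $\ell$, square, and use Cauchy--Schwarz together with the noise bounds to obtain an inequality of the form
\[
\E\!\Bigl[\max_{k\leq\ell}\|Z^n_{t_k^n}-Z_{t_k^n}\|^2\Bigr]\leq C_1\frac{1}{n}\sum_{k=1}^{\ell}\E[\|Z^n_{t_{k-1}^n}-Z_{t_{k-1}^n}\|^2]+\frac{C_2 d}{n}.
\]
The discrete Gronwall lemma then produces the desired bound $c^2 d/n$ at the grid points, and extension to arbitrary $t\in[0,1]$ is immediate from the piecewise-constant definition of $Z^n$ and the Lipschitz continuity of $Z$.

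The principal technical difficulty lies in the first step: cleanly extracting the exact drift $-\tfrac{2}{3}A_n^2\alpha\nabla L$ from the seemingly high-variance, zero-order increment $(L_1-L_2)(e^{-U}-e^{U})$. Two delicate cancellations are required: the $L_2$-summand vanishes in conditional expectation thanks to its $\F_{\ell-1}$-measurability paired with the antisymmetric factor $e^{-U_{ij}^{(\ell)}}-e^{U_{ij}^{(\ell)}}$; and within the $L_1$-Taylor expansion, independence of the components of $\uf^{(\ell)}$ ensures that only the diagonal moment $\E[U_{ij}(e^{-U_{ij}}-e^{U_{ij}})]$ survives, yielding the characteristic constant $\tfrac{2}{3}A_n^2$. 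Keeping the Taylor remainders and the step-to-step memory terms all small enough to contribute at most at the sharp $\sqrt{d/n}$ rate is the main accounting challenge.
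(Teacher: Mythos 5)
Your proposal follows essentially the same route as the paper's proof: the same conditional-drift-plus-martingale-difference decomposition yielding the drift $-\tfrac{2}{3}A_n^2\alpha\nabla L$ per step, Doob's maximal inequality for the martingale part, and a Gronwall argument to close the estimate; the only cosmetic difference is that you extract the drift by a direct Taylor expansion and moment computation, whereas the paper invokes Schmidt-Hieber's exact tilted-density representation of $\E\bigl[L(z+\uf)(e^{-U_\nu}-e^{U_\nu})\bigr]$, which produces the same constant via $C(A_n)/(2A_n)=\tfrac{2}{3}A_n^2+O(A_n^3)$. The one step you only gesture at --- showing that the term $\|\Zf^n_{t^n_{\ell-1}}-\Zf^n_{t^n_{\ell-2}}\|$ hidden inside $|L_1-L_2|$ is itself $O(n^{-1})$ --- is handled in the paper by the separate recursive bound of Lemma~\ref{l:gronwall}, and must be made explicit before your per-step variance estimate is legitimate.
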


The previous theorem shows that learning in BNNs based on the local principle of STDP may indeed lead to optimization of parameters according to SGD if the number of spikes per learning opportunity is high. To wit, we start with an initial parameter vector $\thf^{(0)}$
and loss function $L=L(\cdot,\X_1,\Y_1)$.
According to Theorem \ref{thm:convergence} the STDP nearly performs a continuous gradient step as in \eqref{eq:Z_stetig},
leading to an updated parameter vector $\thf^{(1)}$.
Switching now to the loss function $L=L(\cdot,\X_2,\Y_2)$, the next approximate gradient step leads to an updated vector $\thf^{(2)}$ etc.

This procedure only differs from the classical SGD in that we make many small instead of  
one large gradient step per learning opportunity.
Interestingly, neither the gradient nor even the functional dependence of the loss function on the parameters need to be known explicitly for this purpose.
By contrast, it relies crucially on the randomness in the update, which may seem counterintuitive because the desired gradient ODE \eqref{eq:Z_stetig} is deterministic.

\begin{proof}[Proof of Theorem \ref{thm:convergence}]
Following the argument in \cite[Proof of Theorem 1]{schmidt2023interpreting}, we may decompose the dynamics of $\Zf^n$ in coordinate $\nu=(i,j)$ as
\begin{align}\label{eq:Z_diff}
\Z^n_{t^n_{\ell},\nu}
&=\Z^n_{t^n_{\ell-1},\nu}+c_{\nu}^n\bigl(\Zf^n_{t^n_{\ell-1}}\bigr)+\D^n_{\ell,\nu},
\end{align}
where
\begin{align*}
b_\nu^n(z)&:=-n\alpha e^{-A_n} {C(A_n)\over 2A_n}\partial_\nu L(z),\\
c_{\nu}^n(z)&:=\frac{1}{n}\E b_\nu^n(z+\n\Vf^{\nu}),
\end{align*}
and
\begin{align*}
\D^n_{\ell,\nu} &:= \alpha \left(L\bigl(\Zf^n_{t^n_{\ell}}+\n\uf^{(\ell)}\bigr) - L\bigl(\Zf^n_{t^n_{\ell-1}}+\n\uf^{(\ell-1)}\bigr)\right)
\bigl(e^{-\n U^{(\ell)}_\nu}-e^{\n U^{(\ell)}_\nu} \bigr) - c_{\nu}^n\bigl(\Zf^n_{t^n_{\ell-1}}\bigr)
\end{align*}
is a martingale difference process with respect to the filtration $(\F_\ell)_{\ell=0,\dots,n}$ that is generated by all randomness up to step $\ell$.
Moreover, the random vector $\n\Vf^{\nu}$ has independent components where all but the $\nu$th are uniformly distributed 
on $[-A_n,A_n]$ and the $\nu$th has density 
\[f_{A_n}(x) := C(A_n)^{-1}(e^{A_n} - e^x)(e^{A_n} - e^{-x})\] 
on $[-A_n,A_n]$ 
with normalizing constant
\[C(A_n) := {\int_{-A_n}^{A_n} (e^{A_n} - e^x)(e^{A_n} - e^{-x})dx}={2A_n(e^{2A_n} + 1) + 2 - 2e^{2A_n}}.\]
{A Taylor expansion yields that
$C(A_n)/(2A_n)=2A_n^2/3+O(A_n^3)$ as $n\to\infty$.}
We obtain
\begin{align*}
\Zf^n_t-\Zf_t=\int_0^{\tn}\bigl(b_\nu(\Zf^n_{s})-b_\nu(\Zf_{s})\bigr)ds+\Cf^n_{\tn}+\Mf^n_{\tn}+\delf^n_t,
\end{align*}
 for all $t\in[0,1]$ where 
\begin{align*}
b_\nu(z)&:=-{{2\over3}A^2}\alpha\partial_\nu L(z),\\
\Cf^n_{\tn}&:=\sum_{\ell:\,t_\ell^n\leq t}
\left(\mathbf{c}^n\bigl(\Zf^n_{t^n_{\ell-1}}\bigr)-\frac{1}{n}b_\nu\bigl(\Zf^n_{t^n_{\ell-1}}\bigr)\right)\\
&=\sum_{\ell:\,t_\ell^n\leq t}\left(\mathbf{c}^n\bigl(\Zf^n_{t^n_{\ell-1}}\bigr)-\frac{1}{n}b_\nu^n\bigl(\Zf^n_{t^n_{\ell-1}}\bigr)\right)
+\frac{1}{n}\sum_{\ell:\,t_\ell^n\leq t}\left(b_\nu^n\bigl(\Zf^n_{t^n_{\ell-1}}\bigr)-b_\nu\bigl(\Zf^n_{t^n_{\ell-1}}\bigr)\right),\\
\Mf^n_{\tn}&:=\sum_{\ell:\,t_\ell^n\leq t}\Df^n_{t^n_{\ell}},\\
\delf^n_t&:=\Zf_{\tn}-\Zf_t.
\end{align*}
Set
\[\varepsilon_n(t):=\sqrt{\E\sup_{s\leq t}\|\Zf^n_s-\Zf_s\|^2}.\]
Using \eqref{ass:1} and the triangle inequality, we conclude that
\begin{align*}
\varepsilon_n(t)\leq \gamma_n+{{2\over3}A^2}\alpha\lambda\int_0^t\varepsilon_n(s)ds
\end{align*}
with
\begin{align*}
\gamma_n=\sqrt{\E\sup_{\ell=0,\dots,n}\bigl\|\Cf^n_{t^{(\ell)}_n}\bigr\|^2}+\sqrt{\E\sup_{\ell=0,\dots,n}\bigl\|\Mf^n_{t^{(\ell)}_n}\bigr\|^2}+{\sup_{t\in[0,1]}\|\delf^n_{t}\|}.
\end{align*}
By Grönwall's inequality we obtain
\[\varepsilon_n(t)\leq \gamma_n\exp\Bigl({{2\over3}A^2\alpha}\lambda\Bigr).\]
It is therefore sufficient to prove that {$\gamma_n\leq c \sqrt{d/n}$ for some constant $c$}. 

Note that
\[\left\|b_\nu^n\bigl(\Zf^n_{t^n_{\ell-1}}\bigr)-b_\nu\bigl(\Zf^n_{t^n_{\ell-1}}\bigr)\right\|\leq 
\alpha \biggl|{e^{-A_n}n{C(A_n)\over2A_n}-{2\over3}A^2}\biggr|\sup_z\|\partial_\nu L(z)\|\]
and 
\[\left\|\mathbf{c}^n\bigl(\Zf^n_{t^n_{\ell-1}}\bigr)-\frac{1}{n}b_\nu^n\bigl(\Zf^n_{t^n_{\ell-1}}\bigr)\right\|
\leq {1\over n}\alpha e^{-A_n}n{C(A_n)\over2A_n}\lambda A_n\]
because the random vectors $\n\Vf^{\nu}$ are all concentrated on $[-A_n,A_n]$.
Since {
\[e^{-A_n}n{C(A_n)\over2A_n}-{2\over3}A^2=O(n^{-1/2})\] 
and $A_n=O(n^{-1/2})$ as $n\to\infty$,
we obtain that 
$\E\sup_{\ell=0,\dots,n}\|\Cf^n_{t^{(\ell)}_n}\|^2=O(n^{-1/2})$}
as desired.
Moreover, we have that {$\sup_{t\in[0,1]}\|\delf^n_{t}\|=O(n^{-1})$} because
$\Zf$ is {Lipschitz}.

So it only remains to be verified that 
{$\E\sup_{\ell=0,\dots,n}\|\Mf^n_{t^{(\ell)}_n}\|^2\leq cd/n$ for some constant $c$.}
Doob's inequality yields
\[\E\sup_{\ell=0,\dots,n}\bigl\|\Mf^n_{t^{(\ell)}_n}\bigr\|^2\leq 4 \E\left(\sum_{\nu} \sum_{\ell=1}^n\left(\D^n_{{\ell},\nu}\right)^2\right).\]
Since the gradient of $L$ is bounded and the components of the $\n\uf^{(\ell)}$ are concentrated on the interval 
{$[-A_n,A_n]=[-n^{-1/2}A,n^{-1/2}A]$},
we have that 
\begin{equation}\label{e:doobterm}
\left\|\alpha \left(L\bigl(\Zf^n_{t^n_{\ell}}+\n\uf^{(\ell)}\bigr) - L\bigl(\Zf^n_{t^n_{\ell-1}}+\n\uf^{(\ell-1)}\bigr)\right)
\left(e^{-\n\uf^{(\ell)}}-e^{\n\uf^{(\ell)}} \right)\right\| \leq 
a{n^{-1}} + b \left\|\Zf^n_{t^n_{\ell}}-\Zf^n_{t^n_{\ell-1}}\right\|
\end{equation}
for some constants $a,b\in\mathbb R_+$.
By Lemma \ref{l:gronwall} below and $\Zf^n_{t^n_{0}}-\Zf^n_{t^n_{-1}}=0$ this implies that \eqref{e:doobterm}
is bounded by a multiple of {$n^{-1}$} and hence its square by {$n^{-2}$}.
Moreover, 
$\|c_{\nu}^n(\Zf^n_{t^n_{\ell-1}})\|$
is bounded by a multiple of {$n^{-1}$}.
Together, this yields that
$\sum_{\ell=1}^n(\D^n_{{\ell},\nu})^2$
is bounded by a multiple of {$n^{-1}$}, which yields the desired upper bound.
\end{proof}

\begin{lemma}\label{l:gronwall}
 $x_n\leq a+bx_{n-1}$, $n=1,2,\dots$ for $a,b,x_n\in\mathbb R_+$ implies
\[x_n\leq x_0 b^n+a{1-b^n\over 1-b}.\]
\end{lemma}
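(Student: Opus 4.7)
The plan is to proceed by straightforward induction on $n$, using the recursion $x_n \leq a + b x_{n-1}$ to peel off one factor at a time. First I would check the base case $n=0$: the claimed bound reads $x_0 \leq x_0 b^0 + a\frac{1-b^0}{1-b} = x_0$, which holds trivially (and in any event $n=0$ is merely the starting point, since the hypothesis of the lemma is imposed only for $n \geq 1$). For the inductive step, assuming the bound at index $n-1$, I would substitute into the recursion and simplify:
\begin{align*}
x_n \;\leq\; a + b x_{n-1} &\;\leq\; a + b\left(x_0 b^{n-1} + a\frac{1-b^{n-1}}{1-b}\right)\\
&\;=\; x_0 b^n + a\left(1 + \frac{b - b^n}{1-b}\right)\\
&\;=\; x_0 b^n + a\frac{1-b^n}{1-b},
\end{align*}
which is exactly the claim at index $n$.

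An equally clean alternative, which I would mention as a sanity check, is to iterate the recursion directly: unrolling $n$ times gives $x_n \leq b^n x_0 + a(1 + b + b^2 + \cdots + b^{n-1})$, and the geometric-series identity $1 + b + \cdots + b^{n-1} = (1-b^n)/(1-b)$ converts this into the stated bound. Since the argument is a short algebraic manipulation, no real obstacle is expected. The only mild subtlety is the tacit assumption $b \neq 1$, required so that $(1-b^n)/(1-b)$ is defined; in the degenerate case $b=1$ the same induction yields $x_n \leq x_0 + na$, which agrees with the limit of the stated bound as $b \to 1$. In the application inside the proof of Theorem \ref{thm:convergence}, the constants $a,b$ arising from \eqref{e:doobterm} are fixed and the case $b=1$ can be avoided, so this technicality does not affect the use of the lemma there.
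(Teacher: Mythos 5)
Your proof is correct and takes the same route as the paper, whose entire argument is the single line ``This follows by induction on $n$''; you have simply written out the algebra that the authors leave implicit. The observation about the degenerate case $b=1$ is a fair point the paper does not address, but it is harmless for the application.
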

\begin{proof}
This follows by induction on $n$.
\end{proof}


\color{black}

\bibliography{lit}
\bibliographystyle{amsplain}
\end{document}